\newtheorem{theorem}{Theorem}
\newtheorem{lemma}[theorem]{Lemma}
\newtheorem{observation}[theorem]{Observation}
\theoremstyle{definition}
\newtheorem{definition}[theorem]{Definition}
\g@addto@macro\bfseries{\boldmath}
\newcommand{\A}{\ensuremath{\mathcal{A}}}
\newcommand{\B}{\mathcal{B}}
\newcommand{\F}{\mathcal{F}}
\newcommand{\s}{\mathcal{S}}
\newcommand{\eps}{\varepsilon}
\newcommand{\R}{\ensuremath{\mathbb{R}}}
\g@addto@macro\bfseries{\boldmath}
\date{}
\title{Weighted $\eps$-Nets}
\begin{document}
\author[1]{Daniel Bertschinger}
\affil[1]{Department of Computer Science, ETH Z\"urich.\\ \texttt{daniel.bertschinger@inf.ethz.ch}}

\author[2]{Patrick Schnider}
\affil[2]{Department of Computer Science, ETH Z\"urich.\\ \texttt{patrick.schnider@inf.ethz.ch}}

\maketitle

\begin{abstract}
Motivated by recent work of Bukh and Nivasch \cite{Bukh} on one-sided $\eps$-approximants, we introduce the notion of \emph{weighted $\eps$-nets}. It is a geometric notion of approximation for point sets in $\R^d$ similar to $\eps$-nets and $\eps$-approximations, where it is stronger than the former and weaker than the latter. The main idea is that small sets can contain many points, whereas large sets must contain many points of the weighted $\eps$-net. 

In this paper, we analyze weak weighted $\eps$-nets with respect to convex sets and axis-parallel boxes and give upper and lower bounds on $\eps$ for weighted $\eps$-nets of size two and three. Some of these bounds apply to classical $\eps$-nets as well.
\end{abstract}\section{Introduction}

Representing large, complicated objects by smaller, simpler ones is a common theme in mathematics. For one-dimensional data sets this is realized by the notions of medians, means and quantiles. One fundamental difference between medians and quantiles on the one side and the mean on the other side is the robustness of the former against outliners of the data. 

\subparagraph{Centerpoint.}
Medians and quantiles are one-dimensional concepts, whereas modern data sets are often multidimensional. Hence, many generalizations of medians and quantiles to higher dimensions have been introduced and studied. One example is the notion of a centerpoint, that is, a point $c$ such that for every closed halfspace $h$ containing $c$ we know that $h$ contains at least a $\frac{1}{d+1}$-fraction of the whole data, where $d$ denotes the dimension. The Centerpoint Theorem ensures that for any point set in $\R^d$ there always exists such a centerpoint \cite{RadoCenterpoint}. 

Instead of representing a data set by a single point, one could take a different point set as a representative. This is exactly the idea of an \emph{$\eps$-net}. 
\begin{definition}
Given any range space $(X, \mathcal{R})$, an \emph{$\eps$-net} on a point set $P \subseteq X$ is a subset $N \subseteq P$ such that every $R \in \mathcal{R}$ with $|R \cap P| \geq \eps |P|$ has nonempty intersection with $N$. If the condition that an $\eps$-net needs to be a subset of $P$ is dropped, then $N$ is called a \emph{weak $\eps$-net}.
\end{definition}

In this language, a centerpoint is a weak $\frac{d}{d+1}$-net for the range space of halfspaces. The concept of $\eps$-nets has been studied in a huge variety; first, there are statements on the existence and the size of $\eps$-nets, if $\eps$ is given beforehand. On the other hand, one can fix the size of the $\eps$-net a priori and try to bound the range of $\eps$ in which there always exists an $\eps$-net. For the former, it is known that every range space of VC-dimension $\delta$ has an $\eps$-net of size at most $\mathcal{O}(\frac{\delta}{\eps} \log \frac{1}{\eps})$ \cite{Welzlnets}.

\subparagraph{$\eps$-Approximations.}
For some applications though, $\eps$-nets may not retain enough information. For every range we only know that it has a nonempty intersection with the net; however, we do not know anything about the size of this intersection. Hence, the following definition of $\eps$-approximations comes naturally. 

\begin{definition}
Given any range space $(X,\mathcal{R})$ and any parameter $0 \leq \eps \leq 1$, an \emph{$\eps$-approximation} on a point set $P \subset X$ is a subset $A \subset P$ such that for every $R \in \mathcal{R}$ we have $\left| \frac{|R \cap P|}{|P|} - \frac{|R \cap A|}{|A|} \right| \leq \eps.$
\end{definition}

Initiated by the work of Vapnik and Chervonenkis \cite{VapnikIdeanets}, one general idea is to construct $\eps$-approximations by uniformly sampling a random subset $A \subseteq X$ of large enough size. This results in statements about the existence of $\eps$-approximations depending on the VC-dimension of the range space. In particular every range space of VC-dimension $\delta$ allows an $\eps$-approximation of size $\mathcal{O}(\frac{\delta}{\eps^2} \log \frac{1}{\eps})$ \cite{HarpeledSurvey, HarPeledSharir, Welzlapproximation}.

\subparagraph{Convex Sets.}
It is well-known that the range space of convex sets has unbounded VC-dimension; therefore, none of the results mentioned above can be applied. While constant size weak $\eps$-nets still exist for the range space of convex sets \cite{alonbaranyconvexsets, Rubinconvexsets}, the same cannot be said for weak $\eps$-approximations (Proposition 1 in \cite{Bukh}). Motivated by this, Bukh and Nivasch \cite{Bukh} introduced the notion of \emph{one-sided weak $\eps$-approximants}. The main idea is that small sets can contain many points, whereas large sets must contain many points of the approximation. Bukh and Nivasch show that constant size one-sided weak $\eps$-approximants exist for the range space of convex sets. In this work, we define a similar concept, called \emph{weighted $\eps$-nets}. In contrast to one-sided weak $\eps$-approximants, our focus is to understand what bounds can be achieved for a fixed small value of $k$, which is given a priori. In this sense our approach is similar to the one taken by Aronov et al.~\cite{AronovHalfplanes2D} (for standard $\eps$-nets).

\begin{definition}
Given any point set $P \subset \R^d$ of size $n$, a \emph{weighted $\eps$-net of size $k$} (with respect to some range space) is defined as a set of points $p_1, \ldots, p_k$ and some values $\eps = (\eps_1, \ldots, \eps_k)$ such that every set in the range space containing more than $\eps_i n$ points of $P$ contains at least $i$ of the points $p_1, \ldots, p_k$. \end{definition}

Following historic conventions, we denote a weighted $\eps$-net as \emph{strong} if $p_1,\ldots,p_k \in P$ and as \emph{weak} otherwise. In this work, we focus on weak weighted $\eps$-nets of small size for the range space of convex sets and axis-parallel boxes.

%%%%%%%%%%%%%%%%%%%%%%%%%%%%%%%%%%%%%%%%%%%%%%%%%%%%%%%%%%%%%%%%%%%%%%%%
\section{Weighted $\eps$-nets for the range space of convex sets}
\label{sec:convex}

Weighted $\eps$-nets for the range space of halfspaces were already studied by Pilz and Schnider \cite{Pilz}. In this section we generalize one of their results to the range space of convex sets.

\begin{theorem}
\label{theorem:convex}
Let $P$ be a set of $n$ points in general position in $\R^d$. Let $0 < \eps_1 \leq \eps_2 < 1$ be arbitrary constants with $(i)$ $d \eps_1 + \eps_2 \geq d$ and $(ii)$ $\eps_1 \geq \frac{2d-1}{2d+1}$. Then there are two points $p_1$ and $p_2$ in $\mathbb{R}^d$ such that
\begin{itemize}
\item[1.] every convex set containing more than $\eps_1 n$ points of $P$ contains at least one of the points $p_1$ or $p_2$, and
\item[2.] every convex set containing more than $\eps_2 n$ points of $P$ contains both $p_1$ and $p_2$.
\end{itemize}
\end{theorem}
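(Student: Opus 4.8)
The plan is to find two points that simultaneously satisfy both conditions, and I expect the natural candidates to come from centerpoint-type theorems applied to $P$ and to a suitable subset of $P$. For condition~1, I want $p_1$ and $p_2$ positioned so that no convex set can avoid both of them while still capturing more than $\eps_1 n$ points; for condition~2, I want every $\eps_2 n$-heavy convex set forced to contain each of them individually. The key tool will be the \emph{first-selection lemma} / centerpoint machinery in the following guise: if $q$ is a centerpoint of a set $S$ of $m$ points in $\R^d$, then every convex set missing $q$ contains fewer than $\frac{d}{d+1}m$ points of $S$ (a convex set missing $q$ lies in an open halfspace through $q$, and such a halfspace contains at most $\frac{d}{d+1}m$ points). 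Equivalently, any convex set with more than $\frac{d}{d+1}m$ points of $S$ must contain $q$.

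First I would set up $p_2$: let $p_2$ be a centerpoint of all of $P$. Then every convex set with more than $\frac{d}{d+1}n$ points contains $p_2$; since hypothesis~(ii) gives $\eps_1 \geq \frac{2d-1}{2d+1} \geq \frac{d}{d+1}$ is \emph{not} quite what we need directly, I would instead use $\eps_2 \geq \frac{d}{d+1}$, which follows because $\eps_2 \geq \eps_1 \geq \frac{2d-1}{2d+1}$ and one checks $\frac{2d-1}{2d+1} \geq \frac{d}{d+1}$ for all $d \geq 1$. Hence every $\eps_2 n$-heavy convex set contains $p_2$. Next I would choose $p_1$ cleverly so that (a) it is also contained in every $\eps_2 n$-heavy convex set, and (b) together with $p_2$ it blocks all $\eps_1 n$-heavy sets. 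For (b), the idea is: a convex set $C$ that avoids both $p_1$ and $p_2$ should be forced to be small. I would aim to place $p_1$ so that $P$ splits into regions controlled by the pair $\{p_1,p_2\}$, arguing that a convex set avoiding both is confined in a way that bounds its size by $\eps_1 n$; here hypothesis~(i), $d\eps_1 + \eps_2 \geq d$, i.e. $\eps_1 \geq \frac{d-\eps_2}{d} = 1 - \frac{\eps_2}{d}$, looks like the exact inequality produced by a counting argument that loses a $\frac{1}{d}$-fraction per ``coordinate direction'' — suggesting $p_1$ and $p_2$ are chosen along a common line (e.g. both on a ham-sandwich-type hyperplane, or $p_1$ a centerpoint of the points on one side of a halving hyperplane through $p_2$), so that avoiding both points means lying in an intersection of $d$ open halfspaces each missing at most a $\frac{1}{d+1}$ or $\frac{1}{2}$-fraction.

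Concretely I would try: take a hyperplane $H$ that halves $P$ (a bisecting hyperplane), let $P^+, P^-$ be the two open halves, put $p_1 = p_2$? No — they must be usable as two points. Better: let $p_2$ be a centerpoint of $P$ lying on $H$, and let $p_1$ be a centerpoint of $P^+$ (the $\frac{n}{2}$ points on one side), also arranged to lie on $H$ or very close, using a recursive application of the centerpoint theorem in dimension $d$ restricted near $H$. Then a convex set $C$ avoiding $p_1$ misses more than $\frac{1}{d+1}|P^+| = \frac{n}{2(d+1)}$ points of $P^+$; if it also avoids $p_2$ it loses an additional $\frac{n}{d+1}$ from all of $P$; combining, $|C \cap P| \leq n - \frac{n}{d+1} - (\text{correction})$, and I would tune the choice so the bound is exactly $\eps_1 n$ under hypothesis~(i). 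For condition~2, $p_1$ being a centerpoint of $P^+$ only guarantees containment in sets heavy \emph{within} $P^+$; to get that every globally $\eps_2 n$-heavy set contains $p_1$, I would use hypothesis~(ii): a convex set with $\eps_2 n \geq \frac{2d-1}{2d+1} n$ points must contain more than $(\eps_2 - \tfrac12)n \geq \frac{2d-1}{2(2d+1)} \cdot \tfrac{2}{1} \cdot \tfrac{n}{2}$... — i.e. it must contain more than a $\frac{d}{d+1}$-fraction of $P^+$, forcing it to contain the centerpoint $p_1$ of $P^+$; the constant $\frac{2d-1}{2d+1}$ is precisely what makes $\eps_2 n - \frac{n}{2} > \frac{d}{d+1}\cdot\frac{n}{2}$ go through.

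The main obstacle I anticipate is making the two conditions compatible through a \emph{single} pair $(p_1,p_2)$: the choice of $p_1$ that optimizes the ``blocking'' bound in condition~1 (wanting $p_1$ far from $p_2$, deep in $P^+$) is in tension with condition~2 (wanting $p_1$ deep enough in $P^+$ that every large global set still captures it). Verifying that the two numerical hypotheses (i) and (ii) are exactly the thresholds that reconcile these — and checking the general-position and boundary-case issues when convex sets graze the separating hyperplane $H$ — is where the real work lies; I would handle the grazing case by perturbation or by working with the closure of halfspaces, and I would double-check the arithmetic identities $\frac{2d-1}{2d+1} \geq \frac{d}{d+1}$ and $1 - \frac{\eps_2}{d} \leq \eps_1$ to confirm the bounds align.
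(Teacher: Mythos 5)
There is a genuine gap: the proposal is a plan rather than a proof, and the two places where you defer ``the real work'' are exactly where it breaks. First, a concrete arithmetic error: the claim $\frac{2d-1}{2d+1} \geq \frac{d}{d+1}$ is false --- cross-multiplying gives $(2d-1)(d+1) = 2d^2+d-1 < 2d^2+d = d(2d+1)$, so in fact $\frac{2d-1}{2d+1} < \frac{d}{d+1}$ for every $d$. (The fact $\eps_2 \geq \frac{d}{d+1}$ that you want is still true, but it comes from hypothesis $(i)$ together with $\eps_1 \leq \eps_2$, not from $(ii)$.) The same issue kills your argument that every $\eps_2 n$-heavy set contains $p_1$: take $\eps_1 = \eps_2 = \frac{d}{d+1}$, which satisfies $(i)$ with equality and $(ii)$ strictly. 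A convex set with just over $\eps_2 n$ points is only guaranteed to contain just over $\frac{d}{d+1}n - \frac{n}{2} = \frac{d-1}{2(d+1)}n$ points of $P^+$, i.e.\ a $\frac{d-1}{d+1}$-fraction of $P^+$, which is below the $\frac{d}{d+1}$ threshold needed to force it through the centerpoint of $P^+$. So condition~2 for $p_1$ does not follow from your setup, and your displayed chain of inequalities for it does not check out numerically even for $d=2$.

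Condition~1 has a parallel problem: if $C$ avoids both centerpoints you obtain two halfspace certificates, one missing at least $\frac{n}{d+1}$ points of $P$ and one missing at least $\frac{n}{2(d+1)}$ points of $P^+$, but these halfspaces can coincide, so all you can conclude is $|C \cap P| \leq \frac{d}{d+1}n$, which exceeds $\eps_1 n$ when $\eps_1 = \frac{2d-1}{2d+1}$. You anticipate this tension but do not resolve it, and resolving it is the entire content of the theorem. The paper sidesteps the issue by never naming $p_1, p_2$ explicitly: it partitions the \emph{convex sets} into two families $\A$ and $\B$ (big sets go into both; a small set goes into $\B$ if it has more points below a fixed halving hyperplane $H$ and into $\A$ otherwise), shows that any $d+1$ members of one family have a common point --- hypothesis $(i)$ handles any selection containing a big set by pure counting, and hypothesis $(ii)$ handles $d+1$ small sets of the same family via a Helly argument inside $H$ --- and then extracts $p_1$ and $p_2$ from Helly's theorem. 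To rescue a constructive two-centerpoint proof you would need a new idea making the two ``missed'' halfspaces count separately; as written the proposal does not prove the theorem.
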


\begin{proof}
The main idea of the proof is to create two classes $\A$ and $\B$ containing convex subsets of $\R^d$ and to show that all sets of the same class have a common intersection.

Every convex subset of $\R^d$ containing more than $\eps_2 n$ points of $P$ is put into both, $\A$ and $\B$. From now on, we will call such sets \emph{big sets}. Similarly, let us denote by \emph{small sets}, the convex subsets of $\R^d$ containing more than $\eps_1 n$, but at most $\eps_2 n$ points of $P$. Let $H$ be a $(d-1)$-dimensional hyperplane separating the point set into two disjoint, equally sized subsets; the halfspace \emph{above} $H$, and the halfspace \emph{below} $H$. Every small set containing more points of $P$ below $H$ than above $H$ is put into $\B$. Every small set which is not in $\B$, is put into $\A$. 

We now show that any $d+1$ sets in $\A$ and in $\B$, respectively, have non-empty intersection. This will allow us to use Helly's Theorem and conclude the proof. 

\begin{lemma}
\label{lemma:onebig}
Let $0 \leq k \leq d$ be any integer. The intersection $\mathcal{I}$ of any $k$ small sets with any $(d+1-k)$ big sets is non-empty. Moreover, it contains at least one point of $P$.
\end{lemma}

\begin{proof}
This observation follows from a simple counting argument. The complement of any big set contains fewer than $(1 - \eps_2)n$ points of $P$. Hence, there are strictly fewer than $2 (1 - \eps_2)n$ points of $P$ not in the intersection of two big sets. Stated equivalently, there are strictly more than $n - 2 (1- \eps_2)n$ points of $P$ in the intersection of any two big sets. By the exact same reasons, there are strictly more than $n - l (1 - \eps_2)n$ points in the intersection of any $l$ big sets. Consequently, the intersection $\mathcal{I}$ of $k$ small sets and $l$ big sets contains strictly more than $n - k(1- \eps_1)n - l(1 - \eps_2)n$ points of $P$. 

Recall that $d \eps_1 + \eps_2 \geq d$ holds by assumption $(i)$ of the Theorem. Since $\eps_1 \leq \eps_2$ it follows that $(d-k') \eps_1 + ( 1 + k' ) \eps_2 \geq d$ holds for any $0 \leq k' \leq d$. We have
\begin{align*}
|\mathcal{I} \cap P| &> n - k (1 - \eps_1)n - (d + 1 - k)(1 - \eps_2)n \\ 
&= n + \Big( k \eps_1 + \big(1 + (d-k)\big) \eps_2 \Big) n - \Big( k + (d + 1 - k) \Big) n \\
&=n + \Big( (d - k') \eps_1 + (1 + k') \eps_2 \Big) n - (d + 1) n \\
&\geq n + dn - (d+1)n = 0,
\end{align*}
and therefore this intersection contains at least one point of $P$, which proves the Lemma.
\end{proof}

\begin{lemma}
\label{lemma:onlysmall}
Any $d + 1$ small sets in $\A$ (and in $\B$, respectively) have a common, non-empty intersection. 
\end{lemma}

\begin{proof}
Let $A_1, \ldots, A_{d+1}$ be any $d+1$ small sets of $\A$ and assume for the sake of contradiction that they do not have any common intersection. Any $d$ of them have nonempty intersection by a similar counting argument to the one seen above. Therefore, define \[B_i := \bigcap_{\substack{k=1 \\ k \neq i}}^{d+1}{A_k},\] and observe the following.

\begin{observation}
\label{claim:hyperplane}
For every $(d-1)$-dimensional hyperplane $H'$ in $\R^d$ there is one set $B_j$ that does not intersect $H'$. 
\end{observation}

\begin{proof}
This observation can for example be seen by a contradiction argument. Assume that every $B_j$ has an intersection with $H'$. Then the family $A_1 \cap H', \ldots, A_{d+1} \cap H'$ of $(d-1)$-dimensional convex sets satisfies the Helly condition and there exists a point in the intersection of all these sets. This contradicts the assumption that $\bigcap_{i=1}^{d+1}{A_i}$ is empty.
\end{proof}

Following Observation \ref{claim:hyperplane} not all the intersections $B_j$ can intersect the hyperplane $H$, chosen at the very beginning in the proof of Theorem \ref{theorem:convex}. Hence one intersection, without loss of generality let us assume that this is $B_1$, lies completely on one side of $H$.

Assume for now that $B_1$ lies below $H$. However, $A_2, \ldots, A_{d+1}$ all lie in $\A$, so these small sets contain at least the same number of points of $P$ above $H$, as they contain points of $P$ below $H$. Additionally the parts of $A_2, \ldots, A_{d+1}$ lying above $H$ are disjoint by assumption. Hence, the number of points of $P$ above $H$ is at least \[d \frac{\eps_1}{2} n > \frac{n}{2},\] assuming $d \geq 2$ and therefore $d \eps_1 \geq 1$. This strict inequality contradicts the condition that $H$ is a halving plane. Thus $B_1$ lies above $H$. 

Since in this case, $A_1$ and $B_1$ are disjoint, the number of points above $H$ is at least 
\begin{align*}
\frac{|A_1|}{2} + |B_1| &> \frac{\eps_1}{2}n + \big( n - d(1 - \eps_1) n \big) \\ 
&= n \cdot \Big( 1 - d + d \eps_1 + \frac{\eps_1}{2}\Big) \\
&\geq n \cdot \bigg( 1 - d + d \, \frac{2d-1}{2d+1} + \frac{2d-1}{2(2d+1)} \bigg) = \frac{n}{2},
\end{align*}
where we used inequality $(ii)$ of the Theorem. This strict inequality is again a contradiction to the fact that $H$ is a halving hyperplane, which concludes the proof of Lemma \ref{lemma:onlysmall}.
\end{proof}

Together, Lemmas \ref{lemma:onebig} and \ref{lemma:onlysmall} show that any $d + 1$ sets of $\A$ and $\B$ respectively, have a common intersection. Hence by Helly's Theorem, there exist points $p$ and $p'$ in the intersection of all sets in $\A$ and $\B$. Choosing $p_1 := p$ and $p_2 := p'$ concludes the proof of the Theorem.
\end{proof}

%%%%%%%%%%%%%%%%%%%%%%%%%%%%%%%%%%%%%%%%%%%%%%%%%%%%%%%%%%%%%%%%%%%%%%%%
\section{Lower Bounds on $\eps$}
\label{sec:lowerbounds}

Having seen an existential result for weighted $\eps$-nets with respect to convex sets, we are interested in the best possible value for $\eps$. In this chapter we present some lower bounds on $\eps$. First, an example given in \cite{Pilz} can be adapted to show that inequality $(i)$ of Theorem \ref{theorem:convex} is needed in the following sense: In the plane we cannot simultaneously have $\eps_1 > \frac{3}{5}$ and $\eps_2 > \frac{4}{5}$. To see this, consider the point set in Figure \ref{fig:inequality(i)}. Note that one of the two points needs to lie in $l_{a,d}^+ \cap l_{a,d}^-$. The same is true for all intersections depicted in the right part of Figure \ref{fig:inequality(i)}. However, these five intersections cannot be stabbed using only two points. 

\begin{figure}[h]
\centering
	\begin{subfigure}{.49\linewidth}
		\centering
		\includegraphics[scale = 1]{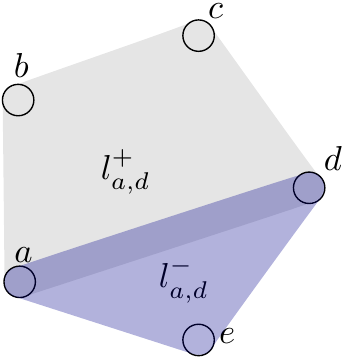}
		%\caption*{(a)}
	\begin{minipage}{.1cm}
	\vfill
	\end{minipage}
        \end{subfigure}
        \hfill
	\begin{subfigure}{.49\linewidth}
		\centering
		\includegraphics[scale = 1]{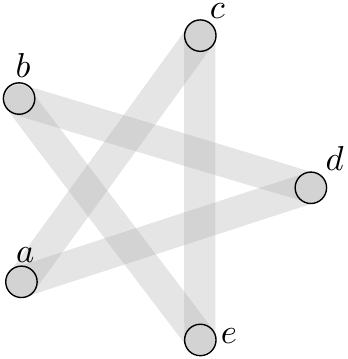}
		%\caption*{(b)}
	\end{subfigure}
	\caption{A point set of five regions in convex position, each containing exactly $k$ points. Two particular regions containing four (three, respectively) of the regions. The intersections of interest are drawn on the right side.}
	\label{fig:inequality(i)}
\end{figure}

\subsection{Lower bounds on $\eps_1$}

On the other hand, one can give lower bounds on $\eps_1$, independently of the value of $\eps_2$. This setting is exactly the same as giving lower bounds on $\eps$ for any $\eps$-net. Hence, any bound given in this chapter is also a lower bound on $\eps$ for $\eps$-nets as well. Mustafa and Ray \cite{MustafaConvexSets2D} have studied this in dimension $2$, showing that there exist point sets $P$ in $\R^2$ such that for every two points $p_1$ and $p_2$, not necessarily in $P$, we can find a convex set containing at least $\frac{4n}{7}$ points of $P$ but neither $p_1$ nor $p_2$.

For higher dimension, to our knowledge the bounds given here are among the first and currently the best lower bounds for the range space of convex sets. 

\begin{lemma}
\label{Claim:alphain3D}
There are point sets $P \subset \R^3$, such that for any two points $p_1$ and $p_2$ in $\R^3$ we can always find a compact convex set containing at least $\frac{5n}{8}$ points of $P$, but neither $p_1$ nor $p_2$.
\end{lemma}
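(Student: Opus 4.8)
The plan is to build an explicit point set in $\R^3$ and argue that two points cannot ``cover'' all the natural large convex sets. The construction should be a higher-dimensional analogue of the planar $\frac{4}{7}$ example of Mustafa and Ray: place points in convex position (say on the moment curve or near the vertices of a simplicial polytope) in several clusters, each cluster carrying roughly the same number of points, and arrange the clusters so that many different ``almost all but one cluster'' subsets are realizable by compact convex sets (their convex hulls). If we use $m$ clusters of equal size and each convex hull omitting a controlled sub-collection of clusters contains at least $\frac{5n}{8}$ points, then a valid weighted $\eps$-net of size two would have to stab every such hull with at least one of $p_1,p_2$. The goal is to choose $m$ and the combinatorial pattern of ``forbidden'' clusters so that stabbing all these hulls with two points is impossible, forcing the existence of an unstabbed hull with at least $\frac{5n}{8}$ points.

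First I would fix the combinatorial skeleton: I expect $m=8$ clusters works, so that a convex hull of five clusters holds exactly $\frac{5n}{8}$ points, and I want a family of $5$-subsets of the $8$ clusters such that no two ``transversal'' points can meet all of them. Concretely, one wants a set system where each point $p_j$ can only lie in the convex hulls it is ``inside'', and by a dimension/Helly-type or parity obstruction the two points together miss at least one hull in the family. Second I would realize this skeleton geometrically: points in general position in $\R^3$ (to match the hypothesis of the lemma and of Theorem~\ref{theorem:convex}), with the clusters small enough that each cluster is effectively a single ``fat point'' and the convex hull of any chosen set of clusters contains exactly the points of those clusters and no others — this needs the clusters to be in convex position and mutually ``far'' so no cluster is accidentally captured. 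Third I would verify the counting: with $8$ equal clusters, any hull of exactly five clusters has exactly $\frac{5n}{8}$ points (up to rounding, which one handles by taking $n$ divisible by $8$ and stating the bound as ``at least $\frac{5n}{8}$''), and compactness is immediate since convex hulls of finite point sets are compact. Finally I would run the impossibility argument: suppose $p_1,p_2$ stab every hull in the family; classify which hulls each $p_i$ can stab; derive a contradiction from the combinatorial design, yielding a hull of five clusters avoiding both points.

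**Main obstacle.** The hard part will be the impossibility argument — designing the family of $5$-subsets of $8$ clusters (and the geometric placement) so that two points provably cannot hit all of them, and making the ``a point inside the hull of clusters $S$ forces constraints on which other hulls it lies in'' reasoning rigorous. In the plane this is a neat convex-position pentagon-type argument; in $\R^3$ one must be careful that the extra degree of freedom does not let a cleverly placed interior point stab ``too many'' hulls at once. I would try to pin this down either via a direct case analysis on the at most $2^8$ membership patterns (pruned heavily by convexity: the set of clusters whose hull contains a fixed point is ``closed'' under taking clusters that are convex combinations of others, and in convex position this set must itself be realized as a face-like sub-configuration), or by finding the right Helly/fractional-Helly obstruction. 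A secondary but routine nuisance is rounding when $8 \nmid n$, handled by padding the point set or by replacing ``$\frac{5n}{8}$'' with $\lfloor \frac{5n}{8}\rfloor$ in intermediate steps; this does not affect the stated ``at least $\frac{5n}{8}$'' conclusion once $n$ is taken divisible by $8$, as is standard for lower-bound constructions of this type.
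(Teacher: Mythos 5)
There is a genuine gap: your write-up is a plan whose central step --- exhibiting a concrete family of $5$-subsets of the $8$ clusters together with a proof that no two points of $\R^3$ can stab all of their convex hulls --- is explicitly deferred (``the hard part will be the impossibility argument''). That step is the entire content of the lemma; without it nothing has been proved. The counting, compactness, and rounding remarks you do supply are the routine parts.

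Moreover, the framework you commit to (eight clusters in \emph{convex/general position}, e.g.\ on the moment curve, each behaving like a fat point) points away from the configuration that actually works. The paper's example is deliberately degenerate: six points forming a hexagon in the $xy$-plane plus one point $u_1$ above and one point $u_2$ below. The coplanarity is the engine of the case analysis. If either transversal point avoids the $xy$-plane, the hexagon itself is a compact convex set with $6>5$ of the $8$ points missing that transversal point, and one finishes with sets of four consecutive hexagon vertices together with $u_1$ or $u_2$; so one may assume both $p_1,p_2$ lie in the $xy$-plane, which reduces the problem to a planar stabbing question about a handful of explicit $5$-point sets (four hexagon vertices plus an apex, or three hexagon vertices plus both apexes) whose projections have empty common intersection. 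Insisting on general position forfeits exactly this reduction, and it is not at all clear that for eight points in convex position on a curve two interior points cannot stab every $5$-vertex hull --- you give no argument, Helly-type or otherwise, ruling this out. To repair the proof you would need to either adopt a concrete (necessarily carefully structured) configuration and carry out the full case analysis on the positions of $p_1$ and $p_2$, or find a genuinely different obstruction; as written, neither is present.
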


\begin{proof}
Consider the following point set in three dimensions consisting of eight points. There is a hexagon in the $xy$-plane, one point above the hexagon (denoted as $u_1$), and one point below the hexagon (denoted as $u_2$), see Figure \ref{fig:alphain3D3D}. 
\begin{figure}[h]
\centering
	\begin{subfigure}{.32\linewidth}
		\centering
		\includegraphics[scale = 1]{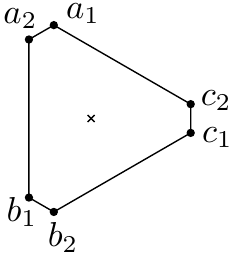}
		%\caption*{(a)}
	\begin{minipage}{.1cm}
	\vfill
	\end{minipage}
        \end{subfigure}
        \hfill
	\begin{subfigure}{.32\linewidth}
		\centering
		\includegraphics[scale = 1]{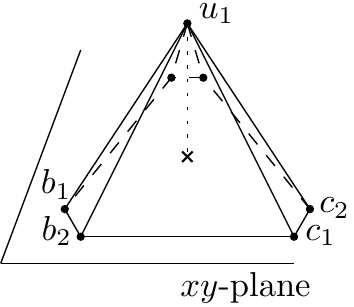}
		%\caption*{(b)}
	\begin{minipage}{.1cm}
	\vfill
	\end{minipage}		
	\end{subfigure}
        \hfill
	\begin{subfigure}{.32\linewidth}
		\centering
		\includegraphics[scale = 1]{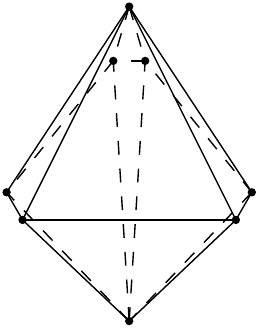}
		%\caption*{(c)}
	\begin{minipage}{.1cm}
	\vfill
	\end{minipage}
	\end{subfigure}
	\caption{A point set in three dimensions, with six points in the $xy$-plane arranged in a hexagon, one point above the $xy$-plane and one point below the $xy$-plane.}
	\label{fig:alphain3D3D}
\end{figure}

We prove that for any two points $p_1, p_2 \in \R^3$ there exists a convex set containing at least $5$ points of $P$, but neither $p_1$ nor $p_2$, which then concludes the proof of Lemma \ref{Claim:alphain3D}. 
First, if none of $p_1$ and $p_2$ lies in the $xy$-plane we are done directly; hence, let us assume that at least one of them, without loss of generality $p_1$, lies in the $xy$-plane. 

\begin{observation}
If $p_2$ does not lie in the $xy$-plane, then the Lemma is true. 
\end{observation}

\begin{proof}
Consider sets of four consecutive vertices of the hexagon in the $xy$-plane in the form of the colored areas of the first part of Figure \ref{fig:alphain3D}. At least one of these areas does not contain $p_1$. Whenever $p_2$ lies above the $xy$-plane, then the mentioned area together with $u_2$ forms a set of $5$ points of $P$ but it contains neither $p_1$ nor $p_2$. Similarly, if $p_2$ lies below the plane, we simply replace $u_2$ by $u_1$.
\end{proof}

\begin{figure}[h]
\centering
	\begin{subfigure}{.32\linewidth}
		\centering
		\includegraphics[scale = 1]{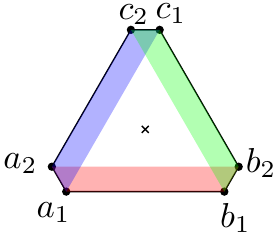}
		%\caption*{(b)}
	\begin{minipage}{.1cm}
	\vfill
	\end{minipage}
        \end{subfigure}
        \hfill
	\begin{subfigure}{.32\linewidth}
		\centering
		\includegraphics[scale = 1]{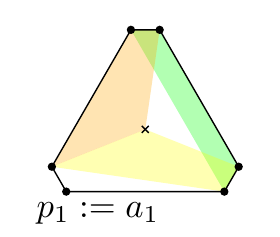}
		%\caption*{(c)}
	\begin{minipage}{.1cm}
	\vfill
	\end{minipage}
	\end{subfigure}
        \hfill
	\begin{subfigure}{.32\linewidth}
		\centering
		\includegraphics[scale = 1]{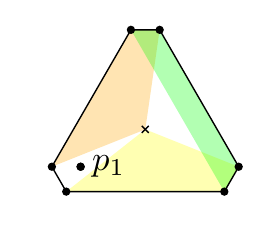}
		%\caption*{(d)}
	\begin{minipage}{.1cm}
	\vfill
	\end{minipage}
	\end{subfigure}	
	\caption{Projections of certain subsets of $P$ onto the $xy$-plane. The cross indicates the projection of $u_1$ and $u_2$.}
	\label{fig:alphain3D}
\end{figure}

Consequently, let us assume that both $p_1$ and $p_2$ lie in the $xy$-plane. At least one of $p_1$ and $p_2$, without loss of generality $p_1$, lies in two colored $4$-gons simultaneously. Let us assume that these are the red and blue areas.

If $p_1$ is equal to a point of the hexagon, without loss of generality $p_1 := a_1$, then consider the sets $A := \left\{ b_1, b_2, c_1, c_2, u_1\right\}$, $B := \left\{ a_2, b_1, b_2, u_1, u_2\right\}$ and $C := \left\{ a_2, c_1, c_2, u_1, u_2\right\}$. The projections of their convex hulls are drawn in the middle part of Figure \ref{fig:alphain3D}. None of $A,B$ and $C$ contains $p_1$ and they do not have a common intersection. Hence, one of them does not contain $p_2$ and we are done. 

If $p_1$ is not equal to a point of the hexagon, then define $A' := \left\{ b_1, b_2, c_1, c_2, u_1\right\}$, $B' := \left\{ a_1, b_1, b_2, u_1, u_2\right\}$ and $C' := \left\{ a_2, c_1, c_2, u_1, u_2\right\}$, see the rightmost part of Figure \ref{fig:alphain3D}. Again, none of $A,B$ and $C$ contains $p_1$ and they do not have a common intersection. Hence, one of them does not contain $p_2$ and we are done. This concludes the proof of Lemma \ref{Claim:alphain3D}.
\end{proof}

%%%%%%%%

For general dimensions a lower bound on $\eps_1$ is given in the following Theorem. 

\begin{theorem}
\label{Lemma:alphain4D}
There are point sets $P$ in $\R^d$ such that for any two points $p_1, p_2 \in \R^d$ there is a compact convex set containing $\frac{d}{d+2}$ of the points of $P$, but neither $p_1$ nor $p_2$.
\end{theorem}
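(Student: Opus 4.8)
The plan is to generalize the three-dimensional construction of Lemma \ref{Claim:alphain3D} to arbitrary dimension $d$. The point set will consist of $d+2$ points: a regular $d$-gon (a ``cycle polytope''-like configuration) lying in a fixed hyperplane $h \cong \R^{d-1}$, together with two additional points $u_1$ and $u_2$, one strictly above $h$ and one strictly below $h$. Thus $n = d+2$, and the claim is that no matter how $p_1, p_2$ are placed, some convex set avoids both while capturing $d$ of the $d+2$ points. Concretely I would arrange the $d$ points in the hyperplane $h$ so that they are in convex position and so that every set of $d-1$ consecutive vertices (there are $d$ such sets cyclically) has the property that the missing vertex is not in the convex hull of the others; the key combinatorial fact to exploit is that these $d$ convex hulls of $(d-1)$-subsets have empty common intersection but every $d-1$ of them intersect, mirroring the hexagon argument.

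The case analysis proceeds as in the proof of Lemma \ref{Claim:alphain3D}. First, if neither $p_1$ nor $p_2$ lies in $h$, then the convex hull of all $d$ points in $h$ is a $(d-1)$-dimensional set in $h$ containing exactly $d$ points of $P$ and avoiding both $p_i$; done. So assume at least one, say $p_1$, lies in $h$. If $p_2 \notin h$, then $p_2$ is on one side of $h$, say above; I pick a $(d-1)$-subset of the $d$ hyperplane points whose convex hull avoids $p_1$ — this is possible because the $d$ cyclic $(d-1)$-subsets have empty common intersection, so $p_1$ misses at least one of them — and take that subset together with $u_2$ (the point below $h$), giving $d$ points of $P$ avoiding both $p_1$ and $p_2$; symmetrically use $u_1$ if $p_2$ is below. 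Finally, if both $p_1, p_2 \in h$, I need to choose among several candidate $d$-point convex sets, each avoiding $p_1$, that are pairwise (or in small families) without common intersection, so that one of them must avoid $p_2$ as well. Following the three-dimensional template, natural candidates are the hull of $d$ of the hyperplane points restricted appropriately, and hulls combining some hyperplane points with $u_1$ and $u_2$; one must verify these candidates each contain $d$ points of $P$, each avoid $p_1$, and have empty common intersection, forcing $p_2$ to escape one of them.

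The main obstacle I anticipate is the last case (both points in the hyperplane $h$): in dimension $3$ the hexagon gave enough room to exhibit explicit triples $A, B, C$ of $5$-point sets with no common intersection, and one must find the right higher-dimensional analogue and check the ``no common intersection'' property in general $d$. This likely requires choosing the $d$-gon in $h$ with some genericity (e.g. vertices on the moment curve, or a symmetric configuration) so that one can argue, via a Helly-type or dimension-counting argument inside $h$ (which is $(d-1)$-dimensional), that the chosen family of convex sets cannot all be stabbed by the single point $p_2$. A clean way to organize this is: reduce to a statement purely about the $(d-1)$-dimensional configuration of the $d$ hyperplane points plus the projections of $u_1, u_2$, and invoke an inductive or Helly-style argument there; the counting $d$ out of $d+2$ is exactly tuned so that removing any two points still leaves a convex set realizing the ratio $\frac{d}{d+2}$. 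I would also double-check the boundary behavior (when $p_1$ coincides with one of the hyperplane points, handled separately in the $3$D proof) carries over, splitting into the ``$p_1$ equals a vertex'' and ``$p_1$ in the relative interior'' subcases exactly as before.
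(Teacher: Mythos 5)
There is a genuine gap, and it sits exactly where you anticipate it: the case where both $p_1$ and $p_2$ lie in the hyperplane $h$ is never actually resolved, only described as a goal (``one must verify these candidates \ldots\ have empty common intersection''). The paper closes this case, and it does so by choosing the $d$ points in $h$ to be \emph{affinely independent}, i.e.\ the vertices of a $(d-1)$-simplex $\s = \{v_1,\dots,v_d\}$, rather than a low-dimensional ``$d$-gon''. This choice is not cosmetic: with a simplex, every point of $\conv(\s)$ has a unique barycentric support, and the in-hyperplane case splits cleanly into (i) $p_1$ is a vertex, say $v_1$ --- then the two sets $\conv\big((\s\setminus\{v_1\})\cup\{u_j\}\big)$, $j=1,2$, force $p_2$ into the opposite facet $\conv(\s\setminus\{v_1\})$, and among the sets $E_i=\conv\big((\s\setminus\{v_1,v_i\})\cup\{u_1,u_2\}\big)$ (each containing $d$ points of $P$ and none containing $v_1$) at least one misses $p_2$, because $\bigcap_i E_i$ is disjoint from $\conv(\s)$; and (ii) both $p_1,p_2$ have support of size at least two --- then a \emph{single} set suffices, namely the convex hull of $u_1,u_2$ and all vertices except one from the support of $p_1$ and one from the support of $p_2$, which contains $d$ points of $P$ and misses both. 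Your proposal contains neither of these two arguments, and the Helly-type or inductive argument you gesture at is not supplied.

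Worse, the configuration you propose may not even support the cases you do handle. If the ``regular $d$-gon'' is taken literally as a planar polygon embedded in $h$, then for $d\ge 4$ your ``key combinatorial fact'' is false: the convex hulls of the $d$ subsets of size $d-1$ all contain the centroid of the polygon, so their common intersection is nonempty, and placing $p_1$ at the centroid already defeats your second case. The fact you want --- empty common intersection of the $d$ facet hulls, with every $d-1$ of them meeting (in a vertex) --- holds precisely when the $d$ points are affinely independent, i.e.\ when you take a simplex; your own suggestion of points on the moment curve in $\R^{d-1}$ would in fact produce a simplex, at which point the cyclic ``consecutive'' structure is vacuous. Note also that the hexagon of Lemma~\ref{Claim:alphain3D} is a genuinely different (and stronger, since $\frac{5}{8} > \frac{3}{5}$) construction using $8$ points in $\R^3$; the general $\frac{d}{d+2}$ bound of Theorem~\ref{Lemma:alphain4D} comes from the leaner $(d+2)$-point simplex-plus-two-apexes configuration, and trying to carry the polygon over to general $d$ is what derails the proposal.
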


\begin{figure}[h]
\centering
	\begin{subfigure}{.32\linewidth}
		\centering
		\includegraphics[scale = 1]{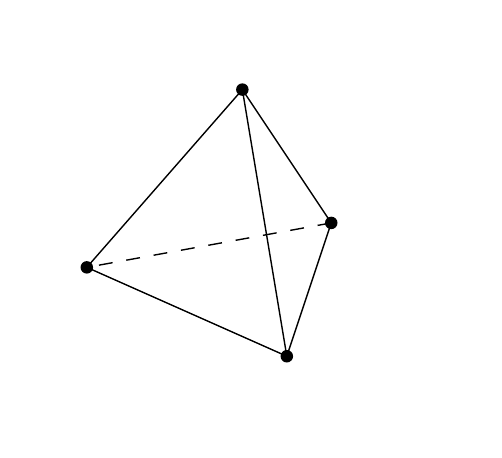}
		%\caption*{(a)}
	\begin{minipage}{.1cm}
	\vfill
	\end{minipage}
        \end{subfigure}
        \hfill
	\begin{subfigure}{.32\linewidth}
		\centering
		\includegraphics[scale = 1]{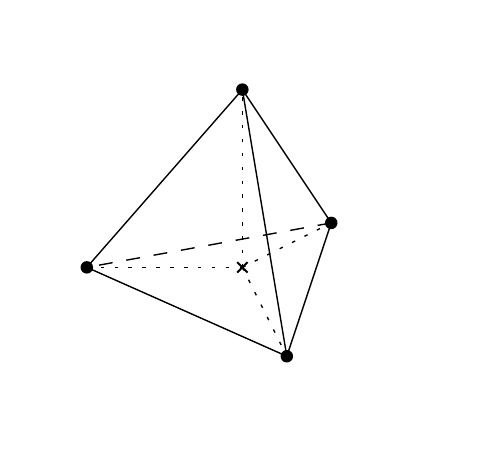}
		%\caption*{(b)}
	\begin{minipage}{.1cm}
	\vfill
	\end{minipage}
        \end{subfigure}
                \hfill
	\begin{subfigure}{.32\linewidth}
		\centering
		\includegraphics[scale = 1]{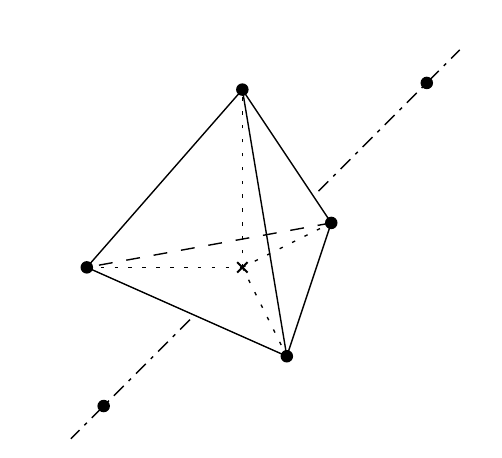}
		%\caption*{(b)}
	\begin{minipage}{.1cm}
	\vfill
	\end{minipage}
        \end{subfigure}
	\caption{We start with a $(d-1)$-simplex, in this case a $3$-simplex (or tetrahedron), and we pierce a line through the center of the tetrahedron. Once the intersection of the line and the simplex is indicated and once the line and the additional points above and below the simplex are drawn.}
	\label{fig:alphain4D}
\end{figure}
For the proof of this Theorem, consider the following construction of a point set $P \subset \R^d$, consisting of exactly $d+2$ points. First, take a $(d-1)$-simplex $\s$ that consists of $d$ points, denoted as $v_1, v_2, \ldots, v_d$. Any such simplex $\s$ can not use the whole $d$ dimensions by definition. Therefore we can pierce a (one-dimensional) line through the center of $\s$ and fix two points on this line, that is, one on each side, denoted as the point above the simplex, $u_1$, and the point below $\s$, $u_2$. Figure \ref{fig:alphain4D} should give an idea of the construction in four-dimensional space.

We prove the following Lemma, which directly implies Theorem \ref{Lemma:alphain4D}.

\begin{lemma}
\label{Lemma:alphain4Db}
For any two points $p_1, p_2 \in \R^d$ there exists a compact convex set containing $d$ of the points of $P$, but neither $p_1$ nor $p_2$.
\end{lemma}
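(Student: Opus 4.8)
The plan is to mimic the structure of the three-dimensional argument (Lemma~\ref{Claim:alphain3D}) but with the hexagon replaced by the $(d-1)$-simplex $\s=\{v_1,\dots,v_d\}$ and the two poles $u_1,u_2$ on the piercing line. Let $\pi$ be the hyperplane spanned by $\s$. First I would dispose of the easy case: if neither $p_1$ nor $p_2$ lies in $\pi$, then $\conv(\s)$ itself, or $\conv(\s\cup\{u_1\}$ or $u_2)$ suitably chosen, is a compact convex set with $d$ (even $d+1$) points of $P$ avoiding both $p_i$; so we may assume $p_1\in\pi$. Next, the sub-case where $p_2\notin\pi$: here exactly as in three dimensions I would use the $d$ facets of $\s$, i.e.\ the sets $F_i=\{v_1,\dots,v_d\}\setminus\{v_i\}$, each of size $d-1$. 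Since $p_1\in\pi$ lies in at most\ldots well, one needs that $p_1$ is outside at least one facet-simplex $\conv(F_i)$; but $p_1$ could lie inside $\conv(\s)$, so instead I would pair each facet $F_i$ with the opposite pole. The cleaner choice: consider the $d$ sets $\conv(F_i\cup\{u_1\})$ and the $d$ sets $\conv(F_i\cup\{u_2\})$, each of size $d$. If $p_2$ is above $\pi$, use the $u_2$-family; at least one $\conv(F_i\cup\{u_2\})$ avoids $p_1$ (they partition $\conv(\s)\cup\{u_2\}$-region in the relevant sense, and $p_1\in\pi$ lies in at most\ldots), and it automatically avoids $p_2$ since it lies weakly below $\pi$. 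Symmetrically if $p_2$ is below $\pi$.

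The main case is $p_1,p_2\in\pi$. Here I would argue inside the hyperplane $\pi\cong\R^{d-1}$, which contains the $(d-1)$-simplex $\conv(\s)$ and the two points $p_1,p_2$. The goal is to find a convex set $Q\subseteq\R^d$ with $d$ points of $P$, avoiding $p_1$ and $p_2$; I would look for $Q$ of the form $\conv(S\cup\{u_1,u_2\})$ with $S\subseteq\{v_1,\dots,v_d\}$, $|S|=d-2$, so that $Q\cap\pi=\conv(S)$ is a $(d-3)$-face of the simplex (a lower-dimensional face), together with possibly $\conv(F_i\cup\{u_j\})$-type sets. The key combinatorial fact I would establish: for any single point $q=p_1\in\pi$, there is a face $G$ of $\conv(\s)$ of dimension $\le d-3$ (equivalently, $\conv$ of $d-2$ of the vertices) such that $q\notin\conv(G)$, and simultaneously enough room to also dodge $p_2$. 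Concretely, among the $\binom{d}{d-2}$ sub-simplices on $d-2$ vertices, together with the $d$ facet-cones $\conv(F_i\cup\{u_j\})$, I want a family of $d$-point sets (each of the form $S\cup\{u_1,u_2\}$ or $F_i\cup\{u_j\}$) that has empty common intersection and none of whose members contains $p_1$; then one of them misses $p_2$ as well, finishing the proof as in dimension three.

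To make the family-with-empty-intersection work I would use the Helly-type observation already deployed in Observation~\ref{claim:hyperplane}: if every member of a family of $\ge d+1$ convex sets in $\R^d$ met a common hyperplane (here $\pi$), Helly in $\pi$ would force a common point; so by choosing the family so that it cannot have a common point and invoking that no member contains $p_1$, a single member must omit $p_2$. Thus the real content is the explicit construction of the family: I would mirror $A,B,C$ and $A',B',C'$ from the $3$-dimensional proof, where there the hexagon vertices came in three opposite pairs $\{a_1,a_2\},\{b_1,b_2\},\{c_1,c_2\}$; in general, since $\conv(\s)$ is a simplex rather than a cross-polytope, I expect the right analogue is to single out one vertex $v_{i_0}$ (the one "behind" $p_1$) and build the sets $\conv(\{v_j : j\neq i_0\}\cup\{u_1\})$, $\conv((F_j\setminus\{v_{i_0}\})\cup\{u_1,u_2\})$ for appropriate $j$, again each of size $d$, none containing $p_1$, with empty common intersection.

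\medskip

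The step I expect to be the main obstacle is precisely this last combinatorial design for $p_1,p_2\in\pi$ in arbitrary dimension: verifying that the chosen $d$-point sets (a) are each compact convex sets with exactly (at least) $d$ points of $P$, (b) genuinely avoid $p_1$ — this requires a clean statement of "$p_1$ lies outside at least one low-dimensional face of the simplex, and we can route the remaining poles around it", which is intuitive but needs care when $p_1$ lies in the relative interior of $\conv(\s)$, and (c) have empty common intersection so the Helly/hyperplane trick kicks in for $p_2$. The $3$-dimensional proof gets away with an ad hoc case split (whether $p_1$ is a hexagon vertex or not); in general I would replace that by: let $v_{i_0}$ be a vertex of $\s$ such that $p_1\notin\conv(\s\setminus\{v_{i_0}\})$ (such $v_{i_0}$ exists since $p_1$ lies in at most one facet's affine hull generically, and one can always choose a vertex whose deletion drops $p_1$; if $p_1$ lies deep inside, perturb the argument using that $p_1\in\pi$ but the sets we build stick out of $\pi$). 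With that vertex fixed, the three families $A,B,C$ generalize verbatim by attaching $u_1$ and/or $u_2$ to faces of $\s\setminus\{v_{i_0}\}$. Once the family is in hand, the conclusion is immediate from Helly as above, and Theorem~\ref{Lemma:alphain4D} follows with $\eps_1=\frac{d}{d+2}$ since $|P|=d+2$ and the set we produce has $\ge d$ points.
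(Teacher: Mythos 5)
Your handling of the cases where at least one of $p_1,p_2$ lies off the hyperplane $\pi$ spanned by $\s$ is correct and essentially matches the paper's Observation~\ref{obs:Simplex01}. The genuine gap is exactly the case you yourself flag as the main obstacle, $p_1,p_2\in\pi$, and the family you sketch for it does not work. Every set of the form $\conv\big((\s\smallsetminus\{v_{i_0},v_j\})\cup\{u_1,u_2\}\big)$ contains the segment $u_1u_2$ and therefore the centroid $c$ of $\s$; so if $p_1=c$ (or, more generally, an interior point whose barycentric coordinates on the two omitted vertices coincide and are at most $1/d$), these sets all contain $p_1$, and the requirement that no member of the family contains $p_1$ fails. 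Which candidate sets avoid $p_1$ depends on the barycentric support of $p_1$ in a way that a single choice of $v_{i_0}$ cannot capture: for $p_1$ a vertex, only the two cones over the opposite facet avoid it (and their intersection is that entire facet, so they cannot by themselves exclude $p_2$), whereas for $p_1$ interior it is precisely the facet-cones that work and the $u_1u_2$-sets that fail.

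The paper closes this case with an idea missing from your proposal (Observation~\ref{obs:Simplex03}): when $p_1$ and $p_2$ both lie in $\conv(\s)$ and each uses at least two vertices in its unique barycentric representation, no family and no Helly step are needed --- a single set suffices. Pick a vertex $v$ in the support of $p_1$ and a different vertex $v'$ in the support of $p_2$, and let $A:=\conv\big((\s\smallsetminus\{v,v'\})\cup\{u_1,u_2\}\big)$, which contains exactly $d$ points of $P$. Every point of $A\cap\pi$ has \emph{equal} barycentric coordinates on $v$ and $v'$ (the only access to these two coordinates is through the centroid, via the segment $u_1u_2$), which is incompatible with $p_1$ (positive on $v$, zero on $v'$ when the supports are disjoint, as the paper assumes; overlapping supports need an extra word but the same idea applies by choosing $v,v'$ on which the coordinates differ). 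The remaining subcase, $p_1$ a vertex, is then handled by the family $E_i=\conv\big((\s\smallsetminus\{v_1,v_i\})\cup\{u_1,u_2\}\big)$ together with $\conv(\s)$ and the empty-common-intersection argument you describe. So your overall architecture (case split plus Helly) is right, but the central construction for the in-hyperplane case must be replaced by this support-based single-set argument; as written, your proof is incomplete at its decisive step.
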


\begin{proof}
We consider several cases depending on the position of $p_1$. In any case, we can not find a point $p_2 \in \R^d$ such that every compact convex set containing $d$ points of $P$ contains at least one of them. 

\begin{observation}
\label{obs:Simplex01}
If $p_1$ lies below (or above) $\s$, then there exists a compact convex set containing $d$ points of $P$, but neither $p_1$ nor $p_2$. 
\end{observation}

\begin{proof}
Assume that $p_1$ lies strictly below $conv(\s)$. Define the sets \[C_i := conv \Big( \big( \s \smallsetminus \{ v_i \} \big) \cup \{ u_1 \} \Big).\] These convex sets contain $d$ points each, but none of them contains $p_1$, as they do not contain any point below $\s$. Therefore all of them have to contain $p_2$, otherwise we are done. The intersection $\bigcap_{i=1}^{d}{C_i}$ lies strictly above $conv(\s)$; thus, $p_2$ has to lie above $conv(\s)$. However, in this case $conv(\s)$ contains neither $p_1$ nor $p_2$.
\end{proof}

\begin{observation}
\label{obs:Simplex02}
Let $p_1$ be a vertex of $\s$, without loss of generality $p_1 := v_1$. There exists a compact convex set containing $d$ points of $P$, but neither $p_1$ nor $p_2$. 
\end{observation}

\begin{proof}
Clearly $p_2$ has to lie in the opposite $(d-2)$-simplex $\s' \subseteq \s$, where opposite means the simplex formed by all vertices of $\s$ except $v_1$. This is because $p_2$ has to lie in the following convex sets: 
\begin{align*}
D_1&:=conv\Big( \big( \s \smallsetminus \{ v_1 \} \big) \cup \{ u_1 \} \Big) \text{ and} \\
D_2&:=conv\Big( \big( \s \smallsetminus \{ v_1 \} \big) \cup \{ u_2 \} \Big),
\end{align*}
whose intersection is exactly $\s'$. Consider \[E_i := conv \Big( \big( \s \smallsetminus \{ v_1, v_i \} \big) \cup \{ u_1, u_2\} \Big),\] containing $d$ points as well; while none of them contains $p_1$. Now $\bigcap_{i=1}^{d+1}{E_i} \, \cap \, conv(\s)$ is empty. Thus at least one of these sets contains neither $p_1$ nor $p_2$.
\end{proof}

\begin{observation}
\label{obs:Simplex03}
Let $p_1$ and $p_2$ be convex combinations of more than one vertex of $\s$. Then there exists a compact convex set containing $d$ points of $P$, but neither $p_1$ nor $p_2$. 
\end{observation}

\begin{proof}
If $p_1$ is a convex combination of more than one vertex of $\s$ and $p_2$ is anything else but a convex combination of more than one vertex of $\s$, we can rename the points and use one of the Lemmas above. Hence, let $p_1$ be a convex combination of $k \leq d$ vertices of the simplex; without loss of generality these are $v_1, v_2, \ldots, v_k$, and let $p_2$ be a convex combination of $k' \leq d$ vertices of the simplex, without loss of generality $v_{k+1}, \ldots, v_{k + k'}$. Define \[A := conv\Big( \{v_2, v_3, \ldots, v_k\} \cup \{ v_{k+2}, \ldots, v_{k + k'} \} \cup \{ v_{k + k'+1}, \ldots, v_d, u_1, u_2 \}\Big).\] It contains every vertex of the convex combination of $p_1$ except one, every vertex of the combination of $p_2$ except one and every other vertex. Hence altogether, $A$ contains $d$ points of $P$. However, $A$ can not contain $p_1$ and it can not contain $p_2$ by construction. 
\end{proof}

Observations \ref{obs:Simplex01}, \ref{obs:Simplex02} and \ref{obs:Simplex03} together prove Lemma \ref{Lemma:alphain4Db} as these are the only situations that can possibly occur. 
\end{proof}

%%%%%%%%%%%%%%%%%%%%%%%%%%%%%%%%%%%%%%%%%%%%%%%%%%%%%%%%%%%%%%%%%%%%%%%%
\section{The range space of axis-parallel boxes}

In this section, we study weighted $\eps$-nets of size $2$ and $3$ for the range space of axis-parallel boxes. Axis-parallel boxes have the property that they allow a much stronger Helly-type result.

\begin{observation}
\label{prop:boxes}
Let $\F$ be a family of compact, axis-parallel boxes in $\R^d$ such that any two of them have a common intersection. Then the whole collection has a nonempty intersection.
\end{observation}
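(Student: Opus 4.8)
The plan is to reduce the $d$-dimensional statement to $d$ independent one-dimensional statements by projecting onto the coordinate axes, exploiting the fact that an axis-parallel box is exactly the Cartesian product of its coordinate projections, which are compact intervals. First I would write each box $B \in \F$ as $B = I_1(B) \times \cdots \times I_d(B)$, where $I_j(B)$ is the (compact, possibly degenerate) interval obtained by projecting $B$ onto the $j$-th coordinate axis. The elementary but crucial observation is that for any two boxes $B, B' \in \F$ we have $B \cap B' = \prod_{j=1}^d \bigl(I_j(B) \cap I_j(B')\bigr)$, so $B \cap B' \neq \emptyset$ if and only if $I_j(B) \cap I_j(B') \neq \emptyset$ for every $j \in \{1, \ldots, d\}$. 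In particular, the pairwise-intersection hypothesis on $\F$ implies that for each fixed coordinate $j$, the family of intervals $\{I_j(B) : B \in \F\}$ is pairwise intersecting on the real line.

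Next I would invoke the one-dimensional case: a family of pairwise-intersecting compact intervals on $\R$ has a common point. (This is the Helly theorem in dimension one; concretely, if $a_B$ and $b_B$ denote the endpoints of $I_j(B)$, pairwise intersection gives $a_B \le b_{B'}$ for all $B, B'$, hence $\sup_B a_B \le \inf_B b_B$, and any point in between — e.g.\ $\sup_B a_B$, which is attained or at least finite since the intervals are compact and pairwise intersecting — lies in every $I_j(B)$.) Applying this to each coordinate $j$ separately yields a point $x_j \in \bigcap_{B \in \F} I_j(B)$ for every $j$. I would then set $x := (x_1, \ldots, x_d)$ and observe that $x \in \prod_{j=1}^d I_j(B) = B$ for every $B \in \F$, so $x \in \bigcap_{B \in \F} B$, which is therefore nonempty.

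The one genuine subtlety to handle carefully is that $\F$ may be infinite, so I should not appeal to the finite Helly theorem but rather to the sup/inf argument above, which works for arbitrary (even uncountable) families of compact intervals; compactness is what guarantees that the resulting $x_j$ actually lies in each interval rather than only in its closure. I expect this to be the main (mild) obstacle — everything else is the routine product decomposition. If one prefers, one can alternatively restrict to finite subfamilies, apply finite Helly in dimension one to each, and then use a compactness/finite-intersection-property argument to pass to the whole family; but the direct sup/inf route is cleaner and I would present that.
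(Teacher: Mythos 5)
Your proof is correct. The paper states Observation~\ref{prop:boxes} without any proof, treating it as a standard fact, and your argument --- decomposing each box as a product of its compact coordinate-interval projections, applying the one-dimensional pairwise-intersection property via $\sup_B a_B \le \inf_B b_B$ in each coordinate, and assembling the resulting coordinates into a common point --- is exactly the standard argument one would supply, including the correct observation that the sup/inf route (rather than finite Helly) is what handles arbitrary infinite families, with compactness ensuring the chosen point lies in each closed interval.
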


As a direct consequence of this observation we note that for any point set $P$ in $\R^d$, there always exists a (weighted) $\frac{1}{2}$-net of size $1$ for the range space of axis-parallel boxes. For weighted $\eps$-nets of larger size we find the following.

\begin{theorem}
\label{theorem:boxes2}
Let $P$ be a set of $n$ points in general position in $\R^d$. Let $0 < \eps_1 \leq \eps_2 < 1$ be arbitrary constants with $(i)$ $\eps_1 \geq \frac{3^{d-1}}{2 \cdot 3^{d-1} + 1}$ and $(ii)$ $\eps_1 + \eps_2 \geq 1$. Then there exist two points $p_1$ and $p_2$ such that 
\begin{itemize}
\item[1.] every axis-parallel box containing more than $\eps_1 n$ points of $P$ contains at least one of the points $p_1$ and $p_2$, and
\item[2.] every axis-parallel box containing more than $\eps_2 n$ points of $P$ contains both, $p_1$ and $p_2$.
\end{itemize}
\end{theorem}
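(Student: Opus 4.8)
The plan is to reproduce the architecture of the proof of Theorem~\ref{theorem:convex}, but with the Helly number dropping from $d+1$ to $2$ thanks to Observation~\ref{prop:boxes}. As before, call an axis-parallel box \emph{big} if it contains more than $\eps_2 n$ points of $P$ and \emph{small} if it contains more than $\eps_1 n$ but at most $\eps_2 n$ points; place every big box into both families $\A$ and $\B$, and distribute the small boxes between $\A$ and $\B$ by a rule to be described. If the rule guarantees that any two boxes in the same family intersect, then Observation~\ref{prop:boxes} produces a point $p_1\in\bigcap\A$ and a point $p_2\in\bigcap\B$; conclusion~1 follows since every small box lies in $\A$ or in $\B$, and conclusion~2 since every big box lies in both.

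The mixed pairs are handled by the same counting as in Lemma~\ref{lemma:onebig}, and here no special treatment of the small boxes is needed: two big boxes miss at most $2(1-\eps_2)n$ points of $P$, hence share more than $(2\eps_2-1)n\ge 0$ points (using $\eps_1\le\eps_2$ and $(ii)$), while a big box together with a small box misses at most $(1-\eps_2)n+(1-\eps_1)n$ points, hence shares more than $(\eps_1+\eps_2-1)n\ge 0$ points by $(ii)$. In particular each of these intersections contains a point of $P$ and so is nonempty. Thus the entire burden is to split the small boxes so that no two of them in the same family are disjoint.

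For this I would first record the elementary but crucial consequences of $(i)$ and of the geometry of boxes. Since $\eps_1\ge\frac{3^{d-1}}{2\cdot3^{d-1}+1}\ge\frac13$, three pairwise disjoint small boxes would jointly contain more than $3\eps_1 n\ge n$ points of $P$, which is impossible; so, regarded as a graph on the small boxes with an edge for each disjoint pair, the \emph{disjointness graph} is triangle-free. Furthermore two disjoint axis-parallel boxes are separated by a hyperplane orthogonal to one of the coordinate axes, so this disjointness graph is the union of $d$ graphs $G_1,\dots,G_d$, where $G_j$ joins two boxes whose projections to the $x_j$-axis are disjoint; orienting each edge of $G_j$ from the box with the smaller projection to the one with the larger exhibits $G_j$ as a comparability graph, hence as a perfect, triangle-free, and therefore bipartite graph. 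The task is to combine these $d$ bipartitions into a single $2$-colouring of the whole disjointness graph, and it is here that the precise value $\frac{3^{d-1}}{2\cdot3^{d-1}+1}$ of $\eps_1$ has to be spent. I expect this to go by induction on $d$, the one-dimensional case being essentially the placement of $p_1,p_2$ at the $\frac13$- and $\frac23$-quantiles, with one coordinate direction peeled off at each step and the constant multiplied by $3$, which is the source of the exponent $d-1$.

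The step I expect to be hardest is exactly this last one: triangle-freeness alone does not force a triangle-free union of comparability graphs to be bipartite, so the argument must genuinely use that the boxes are large — so that very few of them are pairwise disjoint and their positions along different axes are strongly correlated — and that each $G_j$ is not merely bipartite but a comparability graph. Organising the induction so that the colour of a small box is determined by its position relative to a bounded family of halving hyperplanes in the $d-1$ auxiliary coordinates, and checking that two same-coloured small boxes separated along any axis would then jointly exceed $n$ points of $P$, is the delicate bookkeeping; once it is carried out, Observation~\ref{prop:boxes} finishes the proof as above.
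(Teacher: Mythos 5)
Your reduction to Observation~\ref{prop:boxes} and your handling of the big--big and big--small pairs are correct, but the heart of the theorem --- producing a two-colouring of the small boxes under which same-coloured small boxes pairwise intersect --- is exactly the step you leave as a plan. As you yourself note, the ingredients you assemble (triangle-freeness of the disjointness graph from $\eps_1\ge\frac13$, plus the fact that disjointness along each coordinate axis is a comparability relation, so that each $G_j$ is a bipartite comparability graph) do not suffice: a triangle-free union of $d$ bipartite graphs need not be bipartite, and nothing in your sketch pins down how the constant $\frac{3^{d-1}}{2\cdot3^{d-1}+1}$ is actually spent. So there is a genuine gap at the decisive point.

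The paper takes a different, constructive route that sidesteps the colouring problem entirely. It builds $p_1$ explicitly as the intersection of $d$ axis-orthogonal hyperplanes chosen by a balancing procedure (in the plane with $\eps_1=\frac37$, $\eps_2=\frac47$: a horizontal line $l_1$ with $\frac{3n}{7}$ points below it, and a vertical line $l_2$ adjusted so that two specific regions above $l_1$ carry balanced point counts). One then checks directly that every big box contains $p_1$, that every small box missing $p_1$ must lie entirely above $l_1$ or entirely on one side of $l_2$, and that any two such small boxes --- as well as any small box paired with any big box, by $(i)$ and $(ii)$ --- intersect, by counting points in the regions cut out by the construction. Observation~\ref{prop:boxes} applied to the big boxes together with the small boxes missing $p_1$ then yields $p_2$. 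In effect one of your two families is replaced by the single explicitly constructed point $p_1$, and it is this iterated balancing (one step per coordinate after the first, each costing a factor of $3$) that produces the exponent $d-1$ in condition $(i)$. To repair your argument you would need to supply such a construction, or otherwise prove that the disjointness graph of the small boxes is bipartite; the latter does not follow from the properties you list.
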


\begin{proof}[Sketch of Proof]
For the sake of simplicity, we only present a proof in $\R^2$ with fixed values $\eps_1 = \frac{3}{7}$ and $\eps_2 = \frac{4}{7}$. For other values the proof works analogously. First, divide the point set with a horizontal line $l_1$, such that there are $\frac{3n}{7}$ points below $l_1$ and $\frac{4n}{7}$ points above $l_1$. Then add two lines $l',l''$ perpendicular to $l_1$ splitting the point set below $l_1$ into three parts containing the same number of points, see Figure \ref{fig:boxes2punktein2D} (left).
\begin{figure}[h]
\centering
	\begin{subfigure}{.24\linewidth}
		\centering
		\includegraphics[scale = 1]{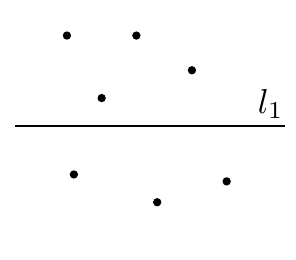}
		%\caption*{(a)}
	\begin{minipage}{.1cm}
	\vfill
	\end{minipage}
        \end{subfigure}
        \hfill
	\begin{subfigure}{.24\linewidth}
		\centering
		\includegraphics[scale = 1]{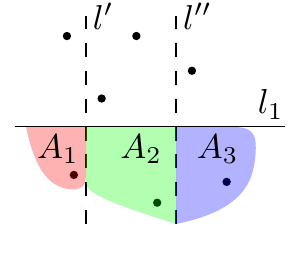}
		%\caption*{(b)}
	\begin{minipage}{.1cm}
	\vfill
	\end{minipage}
        \end{subfigure}
        \hfill
	\begin{subfigure}{.24\linewidth}
		\centering
		\includegraphics[scale = 1]{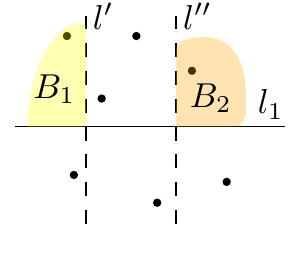}
		%\caption*{(c)}
	\begin{minipage}{.1cm}
	\vfill
	\end{minipage}
	\end{subfigure}
        \hfill
	\begin{subfigure}{.24\linewidth}
		\centering
		\includegraphics[scale = 1]{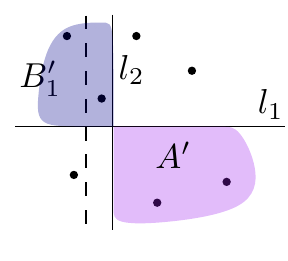}
		%\caption*{(d)}
	\begin{minipage}{.1cm}
	\vfill
	\end{minipage}
	\end{subfigure}	
	\caption{An example of the construction of $p_1$. First the point set $P$ is split by a line $l_1$. Then the lines $l'$ and $l''$ split the point set below $l_1$ into three disjoint parts containing the same number of points, namely $A_1$, $A_2$ and $A_3$. One of $B_1$ and $B_2$ has to contain ''few'' points of $P$, without loss of generality $B_1$, and by slightly changing $l'$ we can ensure that $B_1'$ and $A'$ contain the same number of points of $P$. The resulting lines define $p_1 := l_1 \cap l_2$.}
	\label{fig:boxes2punktein2D}
\end{figure}

Now one of the two outside areas above $l_1$, without loss of generality $B_1$, contains at most $\frac{2n}{7}$ points of $P$. We then move $l'$ slightly towards $l''$, until we have the same number of points in $B_1'$ as in $A'$. We now define $p_1 := l_1 \cap l_2$.

As the area left of $l_2$ and the area below $l_1$ contain $\frac{3n}{7}$ of the points of $P$ every big box contains $p_1$ for sure. On the other hand every small box not containing $p_1$ lies completely above $l_1$ or completely right of $l_2$. By a simple counting argument, any two small boxes not containing $p_1$ intersect. Any small box intersects any big box as a consequence of inequality $(i)$; hence, applying Observation \ref{prop:boxes} we find $p_2$ satisfying the conditions of the Theorem.

For higher dimensions, we use hyperplanes instead of lines and we repeat the second step $d-1$ times (once in every direction except the first).
\end{proof}

A similar spitting idea works for weighted $\eps$-nets of size $3$: Let $l_1$ be a horizontal halving line and let $l_2$ be a vertical halving line. Let $A$ and $B$ be the areas above and below $l_1$ and let $L$ and $R$ be the areas left and right of $l_2$. The lines define four quadrants, where two opposite ones, say $L \cap A$ and $R \cap B$, both contain at least $\frac{n}{4}$ points of $P$. Define $p_1 := l_1 \cap l_2$. For every relevant box $\square$, assign $\square$ to the area $X \in \left\{A,B,L,R\right\}$ for which $|\square \cap X|$ is maximized. Put every box assigned to $A$ and $L$ into $\A$ and every box assigned to $B$ and $U$ into $\B$. Choosing the right values for $\eps_1, \eps_2$ and $\eps_3$, we can apply Observation \ref{prop:boxes} to $\A$ and $\B$ to get the following: 

\begin{theorem}
\label{theorem:boxes3in2D}
Let $P$ be a set of $n$ points in the plane. Let $0 < \eps_1 \leq \eps_2 \leq \eps_3 < 1$ be arbitrary constants with $(i)$ $\eps_1 \geq \frac{3}{8}$,
$(ii)$ $\eps_2 \geq \frac{1}{2}$, and $(iii)$ $\eps_1 + \eps_3 \geq 1$. Then there exist three points $p_1, p_2$ and $p_3$ in $\mathbb{R}^2$ such that every axis-parallel box containing more than $\eps_i n$ points of $P$ contains at least $i$ of the points $p_1, p_2$ and $p_3$.
\end{theorem}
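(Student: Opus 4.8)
The plan is to follow the splitting construction sketched in the paragraph preceding the statement and verify that, with the three given constants, the two families $\A$ and $\B$ each satisfy the pairwise-intersection hypothesis of Observation~\ref{prop:boxes}. First I would fix the halving lines $l_1$ (horizontal) and $l_2$ (vertical) and set $p_1 := l_1 \cap l_2$. The four open quadrants get sizes summing to $n$; since $|A| = |B| = |L| = |R| = n/2$, a short averaging argument shows one pair of opposite quadrants, say $L\cap A$ and $R\cap B$, each carries at least $n/4$ points. Every box $\square$ with $|\square \cap P| > \eps_1 n$ that avoids $p_1$ must lie entirely in one of the four halfplanes $A,B,L,R$; I assign $\square$ to the halfplane $X$ maximizing $|\square \cap X|$ (ties broken arbitrarily), noting that since $\square$ lies in some halfplane this maximum is $|\square \cap P| > \eps_1 n$. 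Boxes assigned to $A$ or $L$ go into $\A$; boxes assigned to $B$ or $R$ go into $\B$; and any ``big'' box (with $> \eps_3 n$ points) is placed into both families.

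\textbf{Key steps.} Next I would check pairwise intersections within $\A$ (the argument for $\B$ is symmetric). There are three cases. (1) Two boxes assigned to the \emph{same} halfplane, say both to $A$: each contains $> \eps_1 n \ge \tfrac38 n$ points of the $\tfrac n2$ points in $A$, so by inclusion–exclusion they share $> 2\cdot\tfrac38 n - \tfrac n2 = \tfrac n4$ points of $A$, hence intersect. (2) One box assigned to $A$ and one to $L$: the box in $A$ contains $> \eps_1 n$ of the points in $A = (L\cap A)\cup(R\cap A)$, and the box in $L$ contains $> \eps_1 n$ of the points in $L = (L\cap A)\cup(L\cap B)$; I want to conclude they both meet the quadrant $L\cap A$ and hence each other. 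Using $|L\cap A| \ge n/4$ together with $|R\cap A| \le n/4$ (since the four quadrants split $A$ and $B$) and $\eps_1 \ge \tfrac38$, the box in $A$ has more than $\eps_1 n - |R\cap A| \ge \tfrac38 n - \tfrac14 n = \tfrac18 n$ points in $L\cap A$; similarly for the box in $L$; and $2\cdot\tfrac18 n = \tfrac14 n \le |L\cap A|$, but one needs the strict version to force a common point — here the strictness of $|\square\cap P|>\eps_1 n$ does the job. (3) A big box meets everything: a box with $>\eps_3 n$ points and a box with $>\eps_1 n$ points share $> (\eps_1+\eps_3)n - n \ge 0$ points by hypothesis $(iii)$, so they intersect; and two big boxes intersect a fortiori. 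Thus all boxes in $\A$ pairwise intersect, so Observation~\ref{prop:boxes} yields a point $p_2$ common to all of them; symmetrically a point $p_3$ common to all boxes in $\B$.

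\textbf{Conclusion.} Finally I verify the three net conditions. A box with $> \eps_1 n \ge$ (actually $> \eps_1 n$) points either contains $p_1$, or is assigned to some halfplane and hence lies in $\A$ or $\B$, so it contains $p_2$ or $p_3$: at least one of $p_1,p_2,p_3$. A box with $> \eps_2 n \ge n/2$ points cannot lie entirely in any of the halving halfplanes $A,B,L,R$ (each of size $n/2$), hence it contains $p_1$; moreover if it additionally has $> \eps_1 n$ points — which it does, since $\eps_2 \ge \eps_1$ — wait, that only gives $p_1$ again; the correct reading is that a box with $>\eps_2 n$ points, being ``big'' only once $> \eps_3 n$, is handled next, while for $\eps_2$ we get exactly $p_1$ plus one more by the assignment argument as it still lies in no single halfplane only if $>n/2$. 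So a $>\eps_2 n$ box: it contains $p_1$ (exceeds every halfplane), and since it still has $>\eps_1 n$ points and cannot avoid $p_1$ — I instead argue it is \emph{not} necessarily in $\A$ or $\B$, so to get the second point I use that a $>\eps_2 n$ box has more than $\eps_2 n - n/2 \ge 0$ points strictly inside, hence... — here I would be careful and note the cleanest route: a box with $>\eps_2 n$ points contains $p_1$; a box with $>\eps_3 n$ points is big, lies in both $\A$ and $\B$, hence contains both $p_2$ and $p_3$, giving all three points; and for the middle condition $i=2$ one shows a $>\eps_2 n$ box contains $p_1$ and at least one of $p_2,p_3$ by the same halfplane-assignment dichotomy applied to whichever halfplane it overshoots most.

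\textbf{Main obstacle.} The delicate point is case~(2) in the pairwise-intersection check: forcing a box assigned to $A$ and a box assigned to $L$ to meet \emph{through the good quadrant} $L\cap A$, which is exactly where the constant $\eps_1 \ge 3/8$ is pinned down by the bound $|R\cap A| \le n/4$. Getting the strict inequalities to line up (so that ``$>\tfrac n4$ points in a set of size $\ge\tfrac n4$'' genuinely forces a common point rather than just non-disjoint supports) is the part that needs the most care, together with the bookkeeping in the final middle-condition verification for $i=2$.
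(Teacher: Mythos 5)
Your construction follows the paper's sketch (halving lines, quadrant bookkeeping, the strong Helly property of Observation~\ref{prop:boxes}), but there is a genuine gap at the middle condition $i=2$, and you in fact flag it yourself without resolving it. Under your assignment rule only boxes that \emph{avoid} $p_1$ are ever placed into $\A$ or $\B$ (apart from the big boxes). A box $M$ with $|M\cap P|>\eps_2 n\ge n/2$ necessarily contains $p_1$, so it is never assigned to either family, and nothing in your construction then forces it to contain $p_2$ or $p_3$. The fix you gesture at (``assign $M$ to whichever halfplane it overshoots most'') does not suffice: membership in $\A$ must buy intersection with every small box already in $\A$, i.e.\ with boxes lying entirely in $A$ \emph{and} with boxes lying entirely in $L$, so $M$ needs many points in \emph{both} $A$ and $L$ (or in both $B$ and $R$), not just in one halfplane. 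The statement that closes the gap is: every box $M$ with more than $n/2$ points satisfies either ($|M\cap A|\ge n/8$ and $|M\cap L|\ge n/8$) or ($|M\cap B|\ge n/8$ and $|M\cap R|\ge n/8$). Indeed, if say $|M\cap A|<n/8$ and $|M\cap R|<n/8$, then $|M\cap L|>3n/8$ and hence $|M\cap L\cap B|\ge |M\cap L|-|M\cap A|>n/4\ge |L\cap B|$, a contradiction (using $|L\cap B|=n/2-|L\cap A|\le n/4$); the other mixed case is symmetric, and the two same-line cases are impossible since $|M\cap A|+|M\cap B|>n/2$. Assigning $M$ accordingly, it meets every small box $S\subseteq A$ of $\A$ because $S$ misses fewer than $n/2-\eps_1 n\le n/8$ points of $A$ while $M$ has at least $n/8$ points there (likewise for $L$), and two such medium boxes meet because both contain $p_1$. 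With this extra clause the argument goes through.

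A second, smaller flaw is in your case (2) of the pairwise-intersection check: bounding each box's contribution to $L\cap A$ separately by ``more than $n/8$'' and comparing the sum ``more than $n/4$'' against the \emph{lower bound} $|L\cap A|\ge n/4$ does not force a common point when $|L\cap A|>n/4$ (try $|L\cap A|=n/3$); strictness alone does not rescue this. Keep the terms together: the two boxes have more than $\eps_1 n-|R\cap A|$ and more than $\eps_1 n-|L\cap B|$ points in $L\cap A$ respectively, and $|R\cap A|+|L\cap B|=n-|L\cap A|-|R\cap B|\le \tfrac{3n}{4}-|L\cap A|$, so their contributions sum to more than $2\eps_1 n-\tfrac{3n}{4}+|L\cap A|\ge |L\cap A|$ exactly when $\eps_1\ge\tfrac38$. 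This is where the constant is genuinely pinned down. Your cases (1) and (3), and the use of $\eps_1+\eps_3\ge 1$ for big boxes, are fine as written.
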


\section{Conclusion}

We have given bounds for weak weighted $\eps$-nets of size $2$ for convex sets and axis-parallel boxes. It remains an interesting question to find bounds for larger sizes. For axis-parallel boxes, we gave a construction for weighted $\eps$-nets of size $3$ in the plane. Unfortunately our construction does not generalize to higher dimensions. It is a natural question whether a similar statement in higher dimensions can be shown using a different construction. 

%\pagebreak
\bibliographystyle{plainurl}
\bibliography{weighted_eps_nets}

\end{document}